\newtheorem{theorem}{Theorem}
\newtheorem{lemma}[theorem]{Lemma}
\newtheorem{proposition}[theorem]{Proposition}
\def\keywords#1{\par\vspace*{8pt}	
{\footnotesize{\noindent{\it Keywords}\/:\ #1\par}}\par}
\title{On Direct Product and Quotients of \\ Strongly Connected Automata}
\author{Zino H. Hu
\footnote{The author is a retiree of The Boeing Company. He is not affiliated with any organization at present.}\\
\href{mailto:zinohu@gmail.com}{zinohu@gmail.com}}
\begin{document}

\maketitle

\begin{abstract}
Let $A \ \times \ B$ be the direct product of a strongly connected permutation automaton $A$ and a strongly connected synchronizing (reset) automaton $B$, then $A \ \times \ B$ is strongly connected and

\boldmath\[A \cong (A \times B)/\pi\] \[B \cong (A \times B)/\rho\] \[(A \times B) \  \cong \  (A \times B)/\pi \  \times \ (A \times B)/\rho\]\unboldmath where $\pi$ and $\rho$ are automaton congruence relations defined in this paper, $(A \times B)/\pi$ and $(A \times B)/\rho$ are quotient automata constructed by $\pi$ and $\rho$ respectively.

\keywords{automata; strongly connected; permutation; synchronizing; reset; direct product; automata congruence; quotient; minimal ideal; right group}

\end{abstract}

\section{Introduction}	

When considering the direct product of automata, one would intuitively think that the direct product of two strongly connected automata would be strongly connected. Fleck \cite{Fleck} showed that for any non-trivial (more than one state) automata $A$ and $B$, if $A$ is homomorphic to $B$ (or vice versa) then $A \times B$ is not strongly connected. This is because an automaton homomorphism is a transition preserving function. Any input string which sends a state $s$ to $s$ itself in $A$ must send every state $t$ to $t$ itself in $B$ if there is a homomorphism from $A$ to $B$, therefore there is no transition between the states $(s,t_1)$ and $(s,t_2)$ in $A \times B$ when $t_1 \neq t_2$. Thus, if the direct product of two strongly connected automata is strongly connected then there is no homomorphism between them.

When the transition semigroup of an automaton is a group, we say the automaton is a permutation automaton. When an automaton has at least one reset input function, i.e. an input function which maps every state into a single fixed state, we say the automaton is a synchronizing (reset) automaton. Bavel et al.\cite{BavelPaper}(Theorem 7.5) showed that the direct product of a permutation strongly connected automaton and a reset strongly connected automaton is strongly connected. 

If a strongly connected automaton can be decomposed into two quotient automata by automata congruence relations, then the quotient automata obtained by decomposition must be strongly connected because of the canonical homomorphisms. In that case, is it isomorphic to the direct product of its quotient automata? Specifically, if a strongly connected automaton can be decomposed into a permutation strongly connected quotient automaton and a synchronizing strongly connected quotient automaton, is it isomorphic to the direct product of these two quotient automata? 

Masunaga et al. \cite{MasunagaPaper} (Corollary 5.6) showed that a strongly connected state-independent automaton is isomorphic to a direct product of a strongly connected state-independent permutation automaton and a strongly connected reset automaton.

In this paper, we answer the question: what exactly is the direct product of a  strongly connected permutation  automaton (not necessarily state-independent) and a strongly connected  synchronizing (reset) automaton?
 
\section{Preliminaries}
The definition of an automaton in this paper is from Bavel \cite {BavelBook}. For a non-empty finite set $\Sigma$, we denote the free monoid over $\Sigma$ by $\Sigma^*$ and the empty input string by $\varepsilon$. An automaton\ \footnote{It is often called semiautomaton in other literature.} is a triple $A=(S,\ \Sigma,\ \delta)$ where $S$ is a set of states; $\Sigma$ is a non-empty set called the input alphabet;  $\delta : S \times \Sigma^* \to S$ is the transition function satisfying $\forall s \in S, \ \forall x, y \in \Sigma^*,\ \delta(s, xy) = \delta(\delta(s, x), y)$ and $\delta(s, \varepsilon) = s$.

An automaton is finite if and only if its set of states is finite. All automata are finite in this paper.

For the sake of brevity, $A = (S, \Sigma,\delta)$ and $B = (T,\Sigma,\gamma)$ are finite non-trivial automata throughout this paper.

An automaton $A$ is strongly connected(transitive) if for every $s, t \in S$, there exists $x\in\Sigma$* such that $\delta(s, x) = t$.

A mapping $\alpha : S \to T$ such that $\alpha(\delta(s, x)) = \gamma(\alpha(s), x)$ for every $s \in S$, $x \in \Sigma ^*$ is a homomorphism from $A$ to $B$. An isomorphism is a bijective homomorphism. If there is an isomorphism between $A$ and $B$, we say $A$ is isomorphic to $B$, denoted by $A \cong B$.

Note that any homomorphism between two strongly connected automata is surjective.

Define a relation $\equiv_A$ on $\Sigma ^*$ by $\forall x, y \in \Sigma^*, x\equiv _A y$ if and only if $\delta(s, x) = \delta(s, y)$ for every $s \in S$. The relation $\equiv _A$ is an equivalence relation. Let $x \in \Sigma ^*$, we denote the equivalence class $\{ y \in \Sigma ^* : x \equiv _A y \} $ by $ [x]_A$. 

We denote $\{[x]_A : x \in \Sigma ^* \}$ by $M_A$. We define an operation on $M_A$ by $[x]_A[y]_A = [xy]_A$ and $[\varepsilon]_A[x]_A = [x]_A[\varepsilon]_A  = [x]_A$ for every $[x]_A, [y]_A \in M_A$. $M_A$ with this operation is a monoid with the identity $[\varepsilon]_A$ and is called the transition monoid of $A$. 

We denote $\{[x]_A : x \in \Sigma ^+ \}$ by $S_A$ where $\Sigma ^+ = \Sigma ^* - \{\varepsilon\}$. We define an operation on $S_A$ by $[x]_A[y]_A = [xy]_A$ for every $[x]_A, [y]_A \in S_A$. $S_A$ with this operation is a semigroup and is called the transition semigroup of $A$.

When viewing $x \in \Sigma^*$ as a function from S to S, i.e. $x : S \to S$,  $x(s) = \delta(s, x)$ for every $s \in S$ we call $x$ an input function. We denote the range of an input function $x$ by $Im_A(x) = \{\delta(s, x) : s \in S\}$ and the rank of $x$ by $|Im_A(x)|$.

A right ideal $R$ of a semigroup $S$ is a nonempty subset of $S$ such that $r \in R$ and $s \in S$ imply $rs \in R$. This is equivalent to $RS \subseteq R$. Similarly a left ideal of $S$ is a nonempty subset $L$ satisfying $SL \subseteq L$. An ideal $I$ of a semigroup $S$ is a subset which is both a left and a right ideal. It satisfies $IS \cup SI \subseteq I$. An ideal $I$ is called a minimal ideal if for every ideal $J$ of $S$, $J \subseteq I$ implies $J = I$.

An idempotent element of a semigroup $S$ is an element $e \in S$ such that $ee = e$. Any finite semigroup has at least one idempotent element.

Let $I_A =$ \{$[x]_A \in S_A : |Im_A(x)|$ is minimal\} be the minimal ideal of the transition semigroup $S_A$ (by \cite{LallementBook} Chapter 2 Proposition 1.3). $I_A$ is called the minimal transition ideal of $A$. $I_A$ is a finite semigroup. Thus it has at least one idempotent element. Denote the set of all the idempotent elements in $I_A$ by $E_A$. Note that $E_A \subseteq I_A$.

An automaton $A$ is called a permutation automaton if every input is a permutation, i.e. every input function is bijective. $A$ is a permutation automaton if and only if its transition monoid $M_A$ is a group if and only if $\delta(s, x) = \delta(t, x)$ implies $s = t$ for all $s, t \in S$, $x \in \Sigma^*$ (since $A$ is finite). It is possible that the transition semigroup $S_A$ of an automaton $A$ is a group even when $A$ is not a permutation automaton. However, if an automaton $A$ is strongly connected and its transition semigroup $S_A$ is a group then its transition monoid $M_A$ is a group, i.e. $A$ is a permutation automaton.

Let $x \in \Sigma^*$, $x$ is a reset input function of A if there exists $t \in S$ such that $\delta(s, x) = t$ for all $s \in S$. If $x$ is a reset input function of $A$ then $[x]_A$ is a right zero element of the transition semigroup $S_A$. Thus $zx \equiv _A x$ and $xz$ is a reset input function of A for every $z \in \Sigma^*$ if $x$ is a reset input function of $A$. An automaton $A$ is called a synchronizing (reset) automaton if $A$ has at least one reset input function. The minimal ideal $I_A$ of a synchronizing automaton $A$ consists of all reset input functions of $A$, hence a right zero semigroup (\cite{CliffordBook} Chapter 1.1 p.4).

The direct product of an automaton $A$ and an automaton $B$ studied in this paper is the automaton $A \times B = (S \times T, \ \Sigma, \ \delta_{A \times B})$ where $\delta_{A \times B}((s, t), x) = (\delta(s, x), \gamma(t, x))$ for every $s \in S$, $t \in T$, $x \in \Sigma ^*$. Note that $A$ and $B$ have the same input alphabet $\Sigma$ in the definition. This type of direct product is called homogeneous direct product. Dorfler  studied inhomogeneous direct product of strongly connected automata in \cite{DorflerPaper}.

A semigroup $S$ is right simple if $S$ itself is the only right ideal of $S$. A semigroup $S$ is right simple if and only if $aS = S$ for every $a \in S$ if and only if $\forall \ a, b \in S$, there exists $x \in S$ such that $ax = b$ (\cite{CliffordBook} Chapter 1.1 p.6). A semigroup is a right group if and only if it is a right simple semigroup and has an idempotent element(\cite{CliffordBook} Chapter 1.11 Theorem 1.27). Thus a finite semigroup is a right group if and only if it is right simple.

An automaton $A$ is called a quasi-ideal automaton if (1) it is strongly connected; (2) the minimal ideal $I_A$ of its transition semigroup is a right group (3) the ranges of the idempotent elements of the minimal ideal $I_A$ of its transition semigroup form a partition on S, i.e. $\cup_{[e]_A \in E_A}Im_A(e) = S$ and $\forall \ [e]_A, [f]_A \in E_A$, $Im_A(e) \cap Im_A(f) = \emptyset$ or $Im_A(e) = Im_A(f)$.

\section{Quasi-Ideal Automata}

A synchronizing (reset) strongly connected automaton has a property that the cardinality of the minimal transition ideal equals the number of states of the automaton.

\begin{lemma}\label{resetsclemma}
Let $A$ be a strongly connected automaton. If $A$ has at least one reset input function then there exists a unique $($up to the equivalence relation $\equiv_A)$ reset input function $x_s$ for every $s \in S$  such that $\delta(t, x_s) = s$ for all $t \in S$.\end{lemma}

\begin{proof}
Let $x$ be a reset input function of $A$. By definition, there exists $s_0 \in S$ such that $\delta(t, x) = s_0$ for all $t \in S$. Let $s \in S$. Since $A$ is strongly connected, there exists $z \in \Sigma^*$ such that $\delta(s_0, z) = s$. Let $x_s = xz$. Then $\delta(t, x_s) = \delta(t, xz) =  \delta(s_0, z) = s$ for all $t \in S$. Thus, $x_s$ is a reset input function such that $\delta(t,x_s) = s$ for all $t \in S$. If $x^\prime_s$ is another reset input function such that  $\delta(t, x^\prime_s) = s$ for all $t \in S$, then $x_s  \equiv_A x^\prime_s$.
\end{proof}

The following theorem is due to Bavel et al.\cite{BavelPaper}(Theorem 7.5). A different proof is provided here.

\begin{theorem} \label{theoremhong}
The direct product of a permutation strongly connected automaton and a reset strongly connected automaton is strongly connected.
\end{theorem}

\begin{proof}
Let $A$  be a permutation strongly connected automaton and $B$ be a synchronizing strongly connected automaton. Let $(s_1, t_1), (s_2, t_2) \in S \times T$. Since $A$ is strongly connected, there exists $x \in \Sigma^*$ such that $\delta(s_1, x) = s_2$. By Lemma \ref {resetsclemma}, there exists a reset input function $y \in \Sigma^*$ such that  $\gamma(t, y) = t_2$ for all $t \in T$. In particular, $\gamma(t_1, y) = t_2$. $S_A$ is a group since $A$ is a permutation automaton. Thus there exists $z \in \Sigma^+$ such that $zy \equiv _A \varepsilon$. Consider the input string $xzy \in \Sigma^*$. $xzy = x(zy) \equiv _A x\varepsilon \equiv_A x$. On the other hand, $xzy \equiv _B y$ since $y$ is a reset input function of $B$. Now \[\delta_{A \times B}((s_1, t_1), xzy) = (\delta(s_1, xzy), \gamma(t_1, xzy)) = (\delta(s_1, x), \gamma(t_1, y)) = (s_2, t_2)\] Hence $A \times B$ is strongly connected.
\end{proof}

The next lemma is useful for analyzing minimal rank input functions of the direct product of automata.

\begin{lemma}\label{minimalideallemma}
Let $A$ and $B$ be automata. Let $x \in \Sigma^*$. Then 

\begin{itemize}
\item[] \ \rm({i})\ \  $Im_{A \times B}(x) = Im_A(x) \times Im_B(x)$.
\item[] \ \rm({ii}) \ $[x]_{A \times B} \in I_{A \times B}$ if and only if $ [x]_A \in I_A$ and $[x]_B \in I_B$.
\end {itemize}

\end{lemma}

\begin{proof}
(i) is trivial. By (i), $|Im_{A\times B}(x)|$ is minimal if and only if both $|Im_A(x)|$ and $|Im_B(x)|$ are minimal.
\end{proof}

If $R$ is a right group and $E$ is the set of all the idempotents in $R$, then $R = \cup_{e \in E}Re$ (\cite{LallementBook} Chapter 1 Exercises 6(b)). Thus $I_A = \cup_{[e]_A \in E_A}I_A[e]_A$ when $I_A$ is a right group.

\begin{lemma}\label{rightgroupunionlemma}
Let $A$ be a strongly connected automaton. If $I_A$ is a right group, then $S = \cup_{e \in E_A} Im_A(e)$.
\end{lemma}

\begin{proof}
$I_A$ itself is its only right ideal. By \cite{LallementBook} Chapter 8 Corollary 2.6, $S = \cup _{[x]_A \in I_A}Im_A(x)$. Since $Im_A(xe) \subseteq Im_A(e)$ for all $[x]_A \in I_A$ and $[e]_A \in E_A$, we have $S = \cup_{[xe]_A \in I_AE_A}Im_A(xe) \subseteq \cup_{[e]_A \in E_A}Im_A(e)\subseteq S$ 
\end{proof}

\begin{theorem}\label{quasiidealtheorem}
The direct product of a permutation strongly connected automaton and a synchronizing strongly connected automaton is a quasi-ideal automaton.
\end{theorem}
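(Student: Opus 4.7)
Write $A = (S,\Sigma,\delta)$ for the permutation strongly connected automaton and $B = (T,\Sigma,\gamma)$ for the synchronizing strongly connected automaton, and let $C = A \times B$. The plan is to verify the three defining properties of a quasi-ideal automaton in order. Strong connectedness of $C$ is immediate from Proposition \ref{propositionhong}, so the real work is in showing that $I(C)$ is a right group and that the ranges of its idempotents partition $S \times T$.

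For the second condition, I would first describe $I(C)$ concretely using Lemma 3(ii). Because $A$ is a permutation strongly connected automaton, every element of $M(A)$ has full rank, so $I(A) = S(A)$ and this is a group; because $B$ is synchronizing strongly connected, $I(B)$ consists exactly of the reset input functions and is a right zero semigroup. Hence $[x]_C \in I(C)$ iff $x$ acts as a reset on $B$ (with no restriction from $A$). To see that $I(C)$ is right simple, given $[x]_C,[y]_C \in I(C)$ I would construct $z$ with $[xz]_C = [y]_C$ as follows: pick any $w$ representing $[x]_A^{-1}[y]_A$ in the group $M(A)$; pick any reset input function $u$ of $B$ that sends $T$ to the state $t_y$ where $y$ resets (which exists by Lemma 2); replace $u$ by the power $u^n$ where $n$ is the order of $[u]_A$ in $M(A)$, so that $[u^n]_A = [\varepsilon]_A$ while $u^n$ still resets $B$ to $t_y$; then set $z = wu^n$. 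A direct check shows $xz \equiv_A y$ and $xz$ resets $B$ to $t_y$, so $[xz]_C = [y]_C$. Since $I(C)$ is a finite semigroup, it has an idempotent, hence $I(C)$ is a right group.

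For the third condition I would first determine what the idempotents of $I(C)$ look like. If $[e]_C \in E(C)$ then $[e]_A$ is idempotent in the group $M(A)$, forcing $[e]_A = [\varepsilon]_A$, and $[e]_B$ is an idempotent of the right zero semigroup $I(B)$, so it is simply some reset to a state $t_e \in T$. By Lemma 3(i), $Im_C(e) = Im_A(e) \times Im_B(e) = S \times \{t_e\}$. Consequently any two idempotent ranges either coincide (when $t_e = t_f$) or are disjoint. To see that every state of $T$ arises as such a $t_e$, I would take any reset $x_t$ to $t$ guaranteed by Lemma 2, form a power of $[x_t]_C$ that is idempotent in the finite semigroup $I(C)$, and observe that all positive powers of $x_t$ still reset $B$ to $t$. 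This gives $\bigcup_{[e]_C \in E(C)} Im_C(e) = S \times T$, completing the partition claim.

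The main obstacle is the right-simplicity argument in the second paragraph: one has to juggle the group structure of $M(A)$ with the reset behavior on $B$ simultaneously, and the trick of replacing $u$ by $u^n$ (so that the required $A$-component is not disturbed while the $B$-reset is preserved) is what makes it go through. The remaining steps are essentially bookkeeping once $I(C)$ has been explicitly identified.
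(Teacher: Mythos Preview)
Your proof is correct and follows the same overall plan as the paper's: verify the three defining conditions in order, using Proposition~\ref{propositionhong} for strong connectedness and Lemma~4 to transfer information between $A$, $B$, and $A\times B$.

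The differences are in execution. For right simplicity, the paper avoids your $u^n$ trick entirely: given $[x]_{A\times B},[y]_{A\times B}\in I(A\times B)$, it simply picks $x'$ with $xx'\equiv_A\varepsilon$ and observes that $xx'y\equiv_A y$ (by the group structure of $M(A)$) and $xx'y\equiv_B y$ (because $y$ itself is already a reset of $B$), so $z=x'y$ works directly. Your detour through a separate reset $u$ and its $A$-order is unnecessary precisely because $y$ already carries both pieces of data you need. For the partition, you compute $E(A\times B)$ explicitly (forcing $[e]_A=[\varepsilon]_A$ and identifying $Im_{A\times B}(e)=S\times\{t_e\}$) and then build an idempotent over each $t\in T$ by taking a power of a reset word; the paper instead invokes Lemma~5 for the covering and then argues, via Lemma~4(i), that two intersecting idempotent ranges must coincide. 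Your route is more concrete and yields a complete description of $E(A\times B)$ as a by-product; the paper's is shorter and reuses a general structural lemma. Either is fine.
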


\begin{proof}
Let $A$ be a permutation strongly connected automaton and $B$ be a synchronizing strongly connected automaton. By Theorem \ref{theoremhong}, $A \times  B$ is strongly connected. Let $[x]_{A \times B}$, $[y]_{A\times B} \in I_{A \times B}$. By Lemma \ref{minimalideallemma} (ii), $[x]_A, [y]_A \in I_A$ and $[x]_B, [y]_B \in I_B$. Since $A$ is a permutation automaton, $M_A$ is a group. Therefore, there exists $x^\prime \in \Sigma^*$ such that $xx^\prime \equiv _A \varepsilon$. Then $xx^\prime y \equiv_A y$. Since $[y]_B \in I_B$, $y$ is a reset input function of $B$. We have $xx^\prime y \equiv_B y$. Thus $xx^\prime y \equiv_{A \times B} y$, i.e. the minimal transition ideal $I_{A \times  B}$ of $A \times B$ is right simple hence a right group. \par 
By Lemma \ref{rightgroupunionlemma}, $S \times T  = \cup _{[e]_{A \times B} \in E_{A \times B}} Im_{A \times B}(e)$. Suppose $Im_{A \times B}(e) \cap Im_{A \times B}(f) \neq \emptyset$ for some $[e]_{A \times B}, [f]_{A \times B} \in E_{A \times B}$. By Lemma \ref{minimalideallemma} (i), we have $Im_A(e) \cap Im_A(f)\neq \emptyset$ and $Im_B(e) \cap Im_B(f) \neq \emptyset$. Both $e$ and $f$ are reset input functions of $B$, $|Im_B(e)| = |Im_B(f)| = 1$. So $Im_B(e) = Im_B(f)$. $M_A$ is a group, $Im_A(e) = Im_A(f) = S$. Hence $ Im_{A \times B}(e) = Im_A(e) \times Im_B(e) = Im_A(f) \times Im_B(f) = Im_{A \times B}(f)$. Therefore $\{Im_{A \times B} (e) : [e]_A \in E_{A \times B}\}$ forms a partition on $S \times T$.
\end{proof}

We have shown that we can produce a quasi-ideal automaton by taking the direct product of a permutation strongly connected automaton and a synchronizing strongly connected automaton. We shall decompose a quasi-ideal automaton by using automaton congruence relations.

First, we shall prove that if the minimal ideal of the input semigroup of an automaton is a right group, then the automaton is state independent with respect to the minimal ideal of the input semigroup.

\begin{proposition}\label{rightgroupSIproposition}
Let $A$ be an automaton. If the minimal ideal $I_A$ of its input semigroup is a right group, then $\forall s, t \in S$, $\delta(s, x) = \delta(t, x)$ for some $x \in I_A \implies \delta(s, y) = \delta(t, y)$ for every $y \in I_A$.
\end{proposition}
\begin{proof}
Let $s, t \in S$. Suppose $\delta(s, x) = \delta(t, x)$ for some $x \in I_A$, let $y \in I_A$. Then $y \equiv xz$ for some $z \in I_A$ since $I_A$ is a right group. Now, \[\delta(s, x) = \delta(t, x) \implies \delta(s, xz) = \delta(t, xz)
\implies \delta(s, y) = \delta(t, y)\]
\end{proof}
An automaton congruence relation on an automaton $A$ is an equivalence relation $\theta$ on $S$ compatible with the transition function, i.e. $s \ \theta \ t$ implies $\delta(s, z) \ \theta \ \delta(t, z)$ for all $z \in \Sigma ^*$. We denote the $\theta$ equivalence class of $s\in S$ by $[s]_\theta = \{t \in S : t \ \theta \ s\}$ and $S/\theta = \{[s]_\theta : s \in S\}$. With an automaton congruence relation $\theta$ on $A$, we can construct the $\theta$-quotient automaton $A/\theta = (S/\theta, \Sigma, \delta_{A/\theta})$ where $\delta_{A/\theta} : S/\theta \times \Sigma^* \to S/\theta$ is defined by $\delta_{A/\theta}([s]_\theta, z) = [\delta(s, z)]_\theta$ for all $z \in \Sigma^*$.

Note that there is a surjective canonical homomorphism from an automaton $A$ to its quotient automaton $A/\theta$. If $A$ is strongly connected, so is $A/\theta$.

\begin{proposition}\label{pipropertyproposition}
Let $A$ be a strongly connected automaton. If the minimal ideal of the input semigroup of $A$ is a right group, then the relation $\pi$ on $S$ defined by $\forall s, t \in S, s \ \pi \ t$ if and only if $\delta(s, x) = \delta(t, x)$ for some $x \in I_A$ is an automaton congruence. Moreover, $\forall s, t \in S, \forall z \in \Sigma^*, \delta(s, z) \ \pi \ \delta(t, z) \implies s \ \pi \ t$.
\end{proposition}
\begin{proof}
We need to show $\pi$ is transitive. Let $r, s, t \in S$. Let $r \ \pi \ s$ and $s \ \pi \ t$.  Then $\delta(r, x) = \delta(s, x)$ for some $x \in I_A$ and $\delta(s, y) = \delta(t, y)$ for some $y \in I_A$. By assumption, $I_A$ is a right group, by Proposition \ref{rightgroupSIproposition} , $\delta(r, y) = \delta(s, y)$. Then $\delta(r, y) = \delta(t, y)$, that is $r \ \pi \ t$. $\pi$ is transitive.
Let $s, t \in S, z \in \Sigma^*$. Suppose $s \ \pi \ t$, then $\delta(s, x) = \delta(t, x)$ for some $x \in I_A$. $zx \in I_A$ since $I_A$ is an ideal. By Proposition  \ref{rightgroupSIproposition}  again, $\delta(\delta(s, z), x) = \delta(s, zx) = \delta(t, zx) = \delta(\delta(t, z), x)$, that is, $\delta(s, z) \ \pi \ \delta(t, z)$. $\pi$ is an automaton congruence.

Suppose $\delta(s, z) \ \pi \ \delta(t, z)$, that is $\delta(\delta(s, z), x) = \delta(\delta(t, z), x)$ for some $x \in I_A$. Hence $\delta(s, zx) = \delta(t, zx)$. $zx \in I_A$ since $I_A$ is an ideal. $s \ \pi \ t$ by definition of $\pi$.
\end{proof}

The idea of the automaton congruence relation $\pi$ is from Perrot \cite{PerrotPrefixPaper} (Theorem 4), Perrot \cite {PerrotGroupsPaper}(Theorem 7) and Lallement \cite {LallementBook} (Chapter 8 Proposition 4.4).

\begin{theorem}\label{theorempi}
Let $A$ be a strongly connected automaton. If the minimal ideal of the input semigroup of $A$ is a right group, then there exists an automaton congruence relation $\pi$ on S defined by $\forall$ $s, t \in S$, $s \ \pi  \ t$ if and only if $\delta(s, x) = \delta(t, x)$ for some $[x]_A \in I_A$. The $\pi$-quotient automaton $A/\pi$ is a strongly connected permuation automaton.
\end{theorem}
\begin{proof}
By Proposition \ref{pipropertyproposition} , the automaton congruence relation $\pi$ exists.  Let $[s]_\pi, [t]_\pi \in S/\pi$. Since $A$ is strongly connected, $\exists z \in \Sigma^*$ such that $\delta(s, z) = t$. Then $\delta_{A/\pi}([s]_\pi, z) = [\delta(s, z)]_\pi = [t]_\pi$. Thus $A/\pi$ is strongly connected. Let $z \in \Sigma^*, [s]_\pi, [t]_\pi \in S/\pi$ and $\delta_{A/\pi}([s]_\pi, z) = \delta_{A/\pi}([t]_\pi, z)$. Then $[\delta(s,z)]_\pi = [\delta(t, z)]_\pi$, i.e. $\delta(s, z) \ \pi \ \delta(t, z)$. By Proposition \ref{pipropertyproposition} , $s \ \pi \ t$, i.e. $[s]_\pi = [t]_\pi$. Thus, $z$ is an injective input function from $S/\pi$ to $S/\pi$. $S/\pi$ is finite. $z$ is a permutation. $A/\pi$ is a permutation automaton. $A/\pi$ is strongly connected since $A$ is strongly connected.
\end{proof}

\begin{proposition}\label{rhodefproposition}
Let $A$ be an automaton. If the ranges of the elements of the minimal ideal of the input semigroup of $A$ form a partition on $S$, then there is an automaton congruence $\rho$ on $S$ such that $\forall s, t \in S, s \ \rho \ t$ iff $s \in Im(x)$ and $t \in Im(x)$ for some $x \in I_A$.
\end{proposition}
\begin{proof}
By assumption, $\{Im_A(x) : [x]_A \in I_A\}$ forms a partition on $S$. Let $\rho$ be the equivalence relation on $S$ induced by the partition, i.e. for every $s, t \in S$, $s \ \rho\  t$ if and only if $s, t \in Im_A(x)$ for some $[x]_A \in I_A$.
 
To show $\rho$ is an automaton congruence relation, let $s, \ t \in S$, $z \in \Sigma^*$. Suppose $s \ \rho \ t$, then there exists $[x]_A \in I_A$ and $s^\prime, \ t^\prime \in S$ such that $s = \delta(s^\prime, x)$ and $t = \delta(t^\prime, x)$. $\delta(s, z) = \delta(s^\prime, xz)$ and $\delta(t, z) = \delta(t^\prime, xz)$.   Hence $\delta(s, z) \in Im_A(xz)$ and $\delta(t, z) \in Im(_A(xz)$.  Since $I_A$ is an ideal, $[xz]_A \in I_A$. Thus, $\delta(s, z) \ \rho \ \delta(t, z)$. $\rho$ is an automaton congruence.
\end{proof}

The idea of the automaton congruence relation $\rho$ is from Lallement \cite {LallementBook} (Chapter 8 Theorem 4.8).

\begin{theorem}\label{theoremrho}
Let $A$ be a strongly connected automaton. If there exists an automaton congruence relation $\rho$ on $S$ defined by $\forall$ $s, t \in S$, $s \ \rho \ t$ if and only if $s , t \in Im_A(x)$ for some $[x]_A \in I_A$, the $\rho$-quotient automaton $A/\rho$ is a synchronizing strongly connected automaton.
\end{theorem}
\begin{proof}
By Proposition \ref{rhodefproposition} , there exists an automaton congruence relation $\rho$ on $S$ defined by $\forall$ $s, t \in S$, $s \ \rho \ t$ if and only if $s , t \in Im_A(x)$ for some $[x]_A \in I_A$. $A/\rho$ is strongly connected since $A$ is strongly connected. Let $[x]_A \in I_A, \ s \in Im_A(x)$. Consider $[s]_\rho$. Let $[t]_\rho \in S/\rho$. Since $\delta(t, x) \in Im_A(x)$, we have $s \ \rho \ \delta(t, x)$, i.e. $[s]_\rho = [\delta(t, x)]_\rho = \delta_{A/\rho}([t]_\rho, x)$. Since $[t]_\rho$ was arbitrarily chosen, $x$ is a reset input function of $A/\rho$. Thus, $A/\rho$ is a synchronizing automaton.
\end{proof}

We have shown that we can decompose a quasi-ideal automaton into two strongly connected quotient automata, one is permuting and the other is synchronizing. We shall show that a quasi-ideal automaton is isomorphic to the direct product of its quotient strongly connected permutation automaton and its quotient strongly connected synchronizing automaton.

Define a binary operation $\circ$ on relations $\pi$ and $\rho$ on a set $S$ by $\pi \circ \rho = \{(s,t) \in S \times S : \exists \ u \in S$ such that $(s,u) \in \pi$ and $(u,t) \in \rho\}$. We denote the equality (identity) relation $\{(s,s) : s \in S\}$ by $1_S$.

The following theorem is from Masunaga et al. \cite {MasunagaPaper}  (Sec.5). Detailed proof is provided here.
\begin{theorem}\label{MasunagaReference}
Let $A$ be an automaton. If there exist automaton congruence relations $\pi$ and $\rho$ on $A$ such that $\pi \cap \rho = 1_S$ and $\pi \circ \rho = S \times S$ then $A \ \cong \ A/\pi \times A/\rho$.
\end{theorem}

\begin{proof}
Define $\alpha : S \to S/\pi \times S/\rho$ by $\alpha(s) = ([s]_\pi,[s]_\rho)$ for all $s \in S$. $\alpha$ is well defined and a homomorphism. Let $s, t \in S$ and $\alpha(s) = \alpha(t)$, i.e. $([s]_\pi,[s]_\rho) = ([t]_\pi, [t]_\rho)$. Then $s \ \pi \ t$ and $s \ \rho \ t$, i.e. $(s,t) \in \pi \cap \rho = 1_S$. We have $s = t$. Hence $\alpha$ is injective. To show that $\alpha$ is surjective, let $([s]_\pi, [t]_\rho) \in S/\pi \times S/\rho$. Since $(s,t) \in S \times S = \pi \circ \rho$, there exists $u \in S$ such that $(s,u) \in \pi$ and $(u,t) \in \rho$. Thus, $[s]_\pi = [u]_\pi$ and $[u]_\rho = [t]_\rho$. We have $\alpha(u) = ([u]_\pi,[u]_\rho) = ([s]_\pi,[t]_\rho)$.
\end {proof}

\begin{theorem}\label{maintheorem}
An  automaton is isomorphic to the direct product of a permutation strongly connected automaton and a synchronizing strongly connected automaton if and only if it is a quasi-ideal automaton.
\end{theorem}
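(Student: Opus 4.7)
My plan is to treat the two directions separately, with the non-trivial work concentrated in the converse.

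For the forward direction, suppose $A \cong P \times R$ with $P = (P,\Sigma,\alpha)$ permutation strongly connected and $R = (R,\Sigma,\beta)$ synchronizing strongly connected. Strong connectivity of $A$ is Hong's theorem \cite 5. Because every input function acts as a bijection on $P$, the rank of $[x]_A$ equals $|P|\cdot|Im_R(x)|$, so $I(A)$ is exactly the set of $[x]_A$ for which $x$ is a reset of $R$. A direct computation of $[x]_A[y]_A$ for two such classes identifies $I(A)$ as a semigroup direct product $G\times E$, where $G$ is a subgroup of the symmetric group on $P$ and $E$ is the right‑zero semigroup of reset states of $R$; this is by definition a right group. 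Its idempotents are the pairs $(1_G,r)$ whose ranges $P\times\{r\}$ partition $P\times R$, once one notes that every state of $R$ is the image of some reset (by strong connectivity of $R$ applied to a single fixed reset).

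For the converse, let $A$ be quasi-ideal, write $E = E(A)$, and set $B_e := Im_A(e)$ for $e \in E$. The central technical lemma I will establish first is: for every $[x] \in S(A)$ and every $e \in E$, the restriction $x|_{B_e}$ is a bijection onto some block $B_{e'}$. I will use the right‑group presentation $I(A) \cong G \times E$: the element $[ex] \in I(A)$ has some right‑zero component $e'$, and unwinding multiplication in $G \times E$ gives $[ex][e'] = [ex]$, forcing $Im_A(ex) \subseteq B_{e'}$ and hence (by equality of minimum ranks) $x(B_e) = Im_A(ex) = B_{e'}$. A symmetric argument shows all blocks have the same cardinality, so the restriction is a bijection.

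With that lemma, the relation $s \sim t$ iff $s$ and $t$ lie in a common block is a congruence, so $R := A/\sim$ is an automaton; it is strongly connected as a quotient of $A$, and every $e \in E$ acts as a reset on $R$, so it is synchronizing. Next I fix $e_0 \in E$ with block $B_0$ and define $P = (B_0, \Sigma, \delta_P)$ by $\delta_P(s,x) = sxe_0$. Verifying the transition axiom for $\delta_P$ reduces to the identity $[xye_0]_A = [xe_0 y e_0]_A$ in $S(A)$, which itself follows from $[e_0 z e_0]_A = [z e_0]_A$; this last identity is a direct computation $(1,e_0)(g,e)(1,e_0) = (g,e_0)$ in the right group $I(A)$. $P$ inherits strong connectivity from $A$ because $e_0$ fixes $B_0$ pointwise, and it is a permutation automaton because each of its input functions is the composite of two block‑to‑block bijections supplied by the central lemma.

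The candidate isomorphism $A \to P \times R$ is $\phi(s) = (se_0, [s]_\sim)$. Injectivity uses that $e_0$ restricts to a bijection on every block, and surjectivity follows from $|S| = |E| \cdot |B_0|$, which holds because the blocks partition $S$ with common size $|B_0|$. Compatibility of $\phi$ with the action unwinds once more to $[xe_0]_A = [e_0 x e_0]_A$. The step I expect to be the main obstacle is the central lemma that each input function carries blocks bijectively onto blocks; once this and the identity $[e_0 x e_0]_A = [x e_0]_A$ are secured, the remaining verifications amount to routine bookkeeping.
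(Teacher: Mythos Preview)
Your argument is correct in both directions. The forward half is close in spirit to the paper's Theorem~6: the paper verifies right simplicity of $I(P\times R)$ by exhibiting, for arbitrary $[x],[y]$ in the ideal, a solution of $[x][z]=[y]$, whereas you go one step further and identify $I(P\times R)$ explicitly as $G\times E$. One small point you pass over: your cardinality count $|S|=|E|\cdot|B_0|$ tacitly assumes that distinct idempotents of $I(A)$ have distinct images. This is true, but needs a line: in a right group every idempotent is a left identity, so $[ef]=[f]$; if moreover $\mathrm{Im}(e)=\mathrm{Im}(f)$ then $se\in\mathrm{Im}(f)$ is fixed by $f$, giving $se=(se)f=s(ef)=sf$ for all $s$, whence $[e]=[f]$.

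The converse is where the two proofs genuinely diverge. The paper introduces a \emph{second} congruence $\pi$ (namely $s\,\pi\,t$ iff $sx=tx$ for every $[x]\in I(A)$), shows that $A/\pi$ is permutation strongly connected and $A/\rho$ is synchronizing strongly connected, and then invokes the Harrison--Masunaga--Noguchi--Oizumi criterion: checking $\pi\cap\rho=1_S$ and $\pi\circ\rho=S\times S$ yields $A\cong A/\pi\times A/\rho$ at once. Your route dispenses with the second congruence and the external decomposition theorem altogether: you realise the permutation factor concretely on a single block $B_0$ with the twisted action $s\mapsto sxe_0$, and verify the isomorphism $\phi(s)=(se_0,[s]_\sim)$ by hand, the key algebraic input being that $e_0$ is a left identity in $I(A)$ so that $[e_0xe_0]=[xe_0]$. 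Your approach is more self-contained and makes the geometry explicit (inputs shuffle the blocks bijectively, and $e_0$ projects each block bijectively onto $B_0$); the paper's is shorter once the quoted criterion is in hand and has the conceptual advantage that both factors appear as honest quotients. The two decompositions agree: your $P$ is isomorphic to the paper's $A/\pi$ via $[s]_\pi\mapsto se_0$.
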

\begin{proof}
Theorem \ref{quasiidealtheorem} establishes the only if part. 

Let $A$ be a quasi-ideal automaton. By Theorem \ref{theorempi}, $A/\pi$ is a permutation strongly connected automaton where $\pi$ is the automaton congruence relation defined on $S$ by $\forall \ s, \ t \in S, \ s \ \pi  \ t$ if and only if $\delta(s, x) = \delta(t, x)$ for some $[x]_A \in I_A$. By Proposition \ref{rightgroupSIproposition} , $\forall \ s, \ t \in S$, $s \ \pi  \ t \implies \delta(s, x) = \delta(t, x)$ for every $[x]_A \in I_A$.

By Theorem \ref{theoremrho}, $A/\rho$ is a synchronizing strongly connected automaton where $\rho$ is the automaton congruence relation defined on $S$ by $\forall \ s, \ t \in S, \ s \ \rho \ t$ if and only if $s, \ t \in Im_A(e)$ for some $[e]_A \in E_A$. 

Let $s, \ t \in S$. Suppose $(s,t) \in \pi \cap \rho$. Then $s, \ t \in Im_A(e)$ for some $e \in E_A$ since $s \ \rho \ t$. There exist $s^\prime, t^\prime \in S$ such that $s = \delta(s^\prime, e)$ and $t = \delta(t^\prime, e)$. We have $\delta(s, e) = \delta(t, e)$ since $s \ \pi \ t$. Now \[s = \delta(s^\prime, e) = \delta(s^\prime, ee) = \delta(s, e) = \delta(t, e) = \delta(t^\prime, ee) = \delta(t^\prime, e) = t\] Hence $(s,t) \in 1_S$ i.e. $\pi \ \cap \ \rho = 1_S$. 

Let $(s,t) \in S \times S$. By Lemma \ref{rightgroupunionlemma} , $t \in Im_A(e)$ for some $[e]_A \in E_A$. Let $[x]_A \in I_A$. Since $I_A$ is right simple, $ez \equiv_A x$ for some $z \in I_A$. \[\delta(s, x) = \delta(s, ez) = \delta(s, eez) = \delta(\delta(s,e), x)\] So $s \ \pi \ \delta(s, e)$. Since $\delta(s, e) \in Im_A(e)$, $\delta(s, e) \ \rho \ t$. Hence $(s,t) \in \pi \circ \rho$

We have $\pi \circ \rho = S \times S$. By  Theorem \ref{MasunagaReference} , $A \ \cong \ A/\pi \times A/\rho$.
\end{proof}

Let $A$ be a permutation strongly connected automaton. Let $B$ be a syncronizing strongly connected automaton. 
By Theorem \ref{quasiidealtheorem} , $A \times B$ is a quasi-ideal automaton. Thus, $I_{A \times B}$ is a right group and the ranges of the idempotent elements of $I_{A \times B}$ form a partition on $S \times T$. 

By Proposition \ref{pipropertyproposition}, there exists an automaton congruence relation $\pi$ on $S \times T$ defined by $\forall (s_1, t_1), (s_2, t_2) \in S \times T, (s_1, t_1) \ \pi \ (s_2, t_2)$ if and only if $\delta_{A \times B}((s_1, t_1), x) = \delta_{A \times B}((s_2, t_2), x)$ for some $x \in I_{A \times B}$.

By Proposition \ref{rhodefproposition}, there exists an automaton congruence relation $\rho$ on $S \times T$ defined by $\forall(s_1, t_1), (s_2, t_2) \in S \times T$, $(s_1, t_1) \ \rho \ (s_2, t_2)$ if and only if $(s_1, t_1), (s_2, t_2) \in Im_{A \times B}(x)$ for some $[x]_{A \times B} \in I_{A \times B}$.

\begin {theorem}\label{butterflytheorem}
Let $A \times B$ be the direct product of a strongly connected permutation automaton $A$ and a strongly connected synchronizing automaton $B$, then $A \cong (A \times B)/\pi$, $B \cong (A \times B)/\rho$ and $(A \times B) \  \cong \  (A \times B)/\pi \  \times \ (A \times B)/\rho$ where $\pi$ and $\rho$ are automaton congruence relations on $S \times T$ defined as above, $(A \times B)/\pi$ and $(A \times B)/\rho$ are quotient automata constructed by $\pi$ and $\rho$ respectively.
\end{theorem}

\begin{proof}
Let $s \in S,\  t \in T$. Then $A = <s>$, $B = <t>$, $(A \times B)/\pi = <[(s, t)]_\pi>$ and $(A \times B)/\rho = <[(s, t)]_\rho>$   because $A$, $B$, $(A \times B)/\pi$ and $(A \times B)/\rho$ are strongly connected.

Define $\alpha : (A \times B)/\pi \to A$ by $ \alpha(\delta_{A \times B}([(s, t)]_\pi, x) = \delta(s, x), \forall x \in \Sigma^*$.
Let $x_1, x_2 \in \Sigma^*$ and $\delta_{A \times B}([(s, t)]_\pi, x_1) = \delta_{A \times B}([(s, t)]_\pi, x_2)$. 

Then $\delta_{A \times B}((s, t), x_1)\ \pi \  \delta_{A \times B}((s, t), x_2)$. $A$ is a permutation automaton, there exists $x_1' \in \Sigma^*$ such that $x_1x_1' \equiv_A \varepsilon$. By Proposition \ref{rightgroupSIproposition}, $\delta(s, x_1x_1') = \delta(s, x_2x_1')$ since $x_1' \in S_A = I_A$ and $\delta(s, x_1)\  \pi \  \delta(s, x_2)$. We have $s = \delta(s, x_2x_1')$. Therefore $\delta(s, x_1) = \delta(s, x_2x_1'x_1) = \delta(s, x_2)$ since $x_1'x_1 \equiv_A \varepsilon$. So, $\alpha$ is well-defined. Let $\delta(s, x_1) = \delta(s, x_2)$. Since $B$ is a synchronizing automaton, there exists a reset input function $[x]_B \in I_B$ such that $\gamma(t, x_1x) = \gamma(t, x) = \gamma(t, x_2x)$. Thus \[\delta_{A \times B}((s, t), x_1) = (\delta(s, x_1), \gamma(t, x_1)) \  \pi \  (\delta(s, x_2), \gamma(t, x_2)) = \delta_{A \times B}((s, t), x_2)\] i.e. $\delta_{A \times B}([(s, t)]_\pi, x_1) = \delta_{A \times B}([(s, t)]_\pi, x_2)$. Hence $\alpha$ is one-to-one. $\alpha$ is surjective because $A$ is strongly connected.

$\alpha([(s, t)]_\pi) = \alpha(\delta_{A \times B}([(s, t)]_\pi, \varepsilon) = \delta(s, \varepsilon) = s$. $\alpha$ is a homomorphism. We have $A \cong (A \times B)/\pi$.

Define $\beta : \  (A \times B)/\rho \to B$ by $\beta(\delta_{A \times B}([(s, t)]_\rho, x) = \gamma (t, x), \forall x \in \Sigma^*$.

Let $x_1, x_2 \in \Sigma^*$ and $\delta_{A \times B}([(s, t)]_\rho, x_1) = \delta_{A \times B}([(s, t)]_\rho, x_2)$. Then by definition of $\rho$, we have $\gamma(t, x_1), \gamma(t, x_2) \in Im(e)$ for some $e \in I_B$. There exist $t_1, t_2 \in T$ such that $\gamma(t, x_1) = \gamma(t_1, e)$ and $\gamma(t, x_2) = \gamma(t_2, e)$. $B$ is a synchronizing automaton, Thus $e$ is a reset input function. Hence $\gamma(t, x_1) = \gamma(t_1, e) = \gamma(t_2, e) = \gamma(t, x_2)$. $\beta$ is well-defined.

Let $\gamma(t, x_1) = \gamma(t, x_2)$. Consider $\delta(s, x_1)$ and $\delta(s, x_2)$. $A$ is a permutation automaton. Both $x_1$ and $x_2$ are permutations. Therefore $\delta(s, x_1), \delta(s, x_2) \in Im(e)$ where $e \in [\varepsilon]_A$. Hence $\delta_{A \times B}([(s, t)]_\rho, x_1) = \delta_{A \times B}([(s, t)]_\rho, x_2)$. Thus $\beta$ is one-to-one. $\beta$ is surjective because $B$ is strongly connected.

$\beta([(s, t)]_\rho) = \beta(\delta_{A \times B}([(s, t)]_\rho, \varepsilon) = \delta(s, \varepsilon) = s$. $\beta$ is a homomorphism. We have $B \cong (A \times B)/\rho$. Thus ($A \times B) \cong (A \times B)/\pi \times (A \times B)/\rho$.  
\end{proof}

The following is an alternative proof of the last part of Theorem \ref{butterflytheorem} :

($A \times B) \cong (A \times B)/\pi \times (A \times B)/\rho$ is a direct consequence of Theorem \ref{maintheorem} because $A \times B$ is a quasi-ideal automaton by Theorem \ref{quasiidealtheorem} .

\pdfbookmark[0]{Acknowledgments}{Ack}
\section*{Acknowledgments}

The author dedicates this paper to the memory of his father who taught him set theory and group theory. The author wants to thank his wife, Debra, for her patience and support during the writing of this paper.

\newpage

\pdfbookmark[0]{References}{Ref}

\end{document}